\newcommand{\argmax}{\operatornamewithlimits{\mathrm{arg\,max}}}
\title{Derandomization for $k$-submodular maximization}
\author{Hiroki Oshima}
\institute{Department of Mathematical Informatics,\\Graduate School of Information Science and Technology, The University of Tokyo}
\date{\today}
\begin{document}
\maketitle
\begin{abstract}
Submodularity is one of the most important property of combinatorial optimization, and $k$-submodularity is a generalization of submodularity. Maximization of a $k$-submodular function is NP-hard, and approximation algorithm has been studied.
For monotone $k$-submodular functions, [Iwata, Tanigawa, and Yoshida 2016] gave $k/(2k-1)$-approximation algorithm. In this paper, we give a deterministic algorithm by derandomizing that algorithm. Our algorithm is $k/(2k-1)$-approximation and runs in polynomial time. 
\end{abstract}
\section{Introduction}
A set function $f:2^V \to \mathbb{R}$ is submodular if, for any $A, B \subseteq V$,
$f(A)+f(B)  \geq f(A \cup B)+f(A \cap B)$. Submodularity is one of the most important properties of combinatorial optimization.  The rank functions of matroids and cut capacity functions of networks are submodular. Submodular functions can be seen as discrete version of convex functions.

For submodular function minimization, \cite{grotschel1981ellipsoid} showed the first polynomial-time algorithm. The combinatorial strongly polynomial algorithms were shown by \cite{iwata2001combinatorial} and \cite{schrijver2000combinatorial}. On the other hand, submodular function maximization is NP-hard and we can only use approximation algorithms.  Let an input function for maximization be $f$, a maximizer of $f$ be $S^*$, and an output of an algorithm be $S$. The approximation ratio of the algorithm is defined as $f(S)/f(S^*)$ for deterministic algorithms and $\mathbb{E}[f(S)]/f(S^*)$ for randomized algorithms. A randomized version of Double Greedy algorithms in \cite{buchbinder2015tight} achieves $1/2$-approximation. \cite{feige2011maximizing} showed $(1/2 +\epsilon )$-approximation requires exponential time value oracle queries. This implies that, Double Greedy algorithm is one of the best algorithms in terms of the approximation ratio. \cite{buchbinder2016deterministic} showed a derandomized version of randomized Double Greedy  algorithm, and their algorithm achieves $1/2$-approximation.

$k$-submodularity is an extension of submodularity.  $k$-submodular function is defined as below.
\begin{definition}
\label{defnksubmo}
Let $(k+1)^{V}:=\{(X_1,...,X_k) \mid X_i\subseteq V \ (i=1,...,k) ,X_i\cap X_j=\emptyset\  (i\neq j)\}$. A function $f:(k+1)^{V}\to \mathbb{R}$ is called $k$-submodular if we have  
\[ f(\boldsymbol{x})+f(\boldsymbol{y})  \geq f(\boldsymbol{x}\sqcap \boldsymbol{y})+f(\boldsymbol{x}\sqcup \boldsymbol{y}) \]
for any $\boldsymbol{x}=(X_1,...,X_k),\ \boldsymbol{y}=(Y_1,...,Y_k) \in (k+1)^{V}$.
Note that
\begin{eqnarray}
\boldsymbol{x}\sqcap \boldsymbol{y}&=&(X_1\cap Y_1,...,X_k\cap Y_k)\ \ and \nonumber \\
\boldsymbol{x}\sqcup \boldsymbol{y}&=&(X_1\cup Y_1\backslash \bigcup_{i\neq 1}(X_i\cup Y_i),...,X_k\cup Y_k\backslash \bigcup_{i\neq k}(X_i\cup Y_i)) \nonumber.
\end{eqnarray}
\end{definition}
It is a submodular function if $k=1$. It is called a bisubmodular function if $k=2$. 

Maximization for $k$-submodular functions is also NP-hard and approximation algorithm have been studied.  An input of the problem is a nonnegative $k$-submodular function. Note that, for any $k$-submodular function $f$ and any $c \in \mathbb{R}$, a function $f'(\boldsymbol{x}):=f(\boldsymbol{x})+c$ is $k$-submodular. An output of the problem is $\boldsymbol{x}=(X_1,...,X_k) \in (k+1)^{V}$. Let an input $k$-submodular function be $f$, a maximizer of $f$ be $\boldsymbol{o}$, and an output of an algorithm be $\boldsymbol{s}$. Then we define the approximation ratio of the algorithm as $f(\boldsymbol{s})/f(\boldsymbol{o})$ for deterministic algorithms, and $\mathbb{E}[f(\boldsymbol{s})]/f(\boldsymbol{o})$ for randomized algorithms.  
For bisubmodular functions, \cite{iwata2013bisubmodular} and \cite{Ward:2016:MKS:2983296.2850419} showed that the algorithm for submodular functions in \cite{buchbinder2015tight} can be extended. \cite{Ward:2016:MKS:2983296.2850419} analyzed an extension for $k$-submodular functions. They showed  a randomized $1/(1+a)$-approximation algorithm with $a=max\{1,\sqrt{(k-1)/4}\}$ and a deterministic $1/3$-approximation  algorithm. 
Now we have a $1/2$-approximation algorithm shown in \cite{iwata2016improved}. In particular, for monotone $k$-submodular functions, \cite{iwata2016improved} gave a $\frac{k}{2k-1}$-approximation algorithm. They also showed any $(\frac{k+1}{2k}+\epsilon)$-approximation algorithm requires exponential time value oracle queries. 

In this paper, we give  a deterministic approximation algorithm for monotone $k$-submodular maximization. It satisfies $\frac{k}{2k-1}$-approximation and runs in polynomial-time. Our algorithm is a derandomized version of algorithm for monotone functions in \cite{iwata2016improved}. We also note the derandomization scheme is from \cite{buchbinder2016deterministic}, used for Double Greedy algorithm.

\section{Preliminary}
Define a partial order $\preceq$ on $(k+1)^{V}$ for $\boldsymbol{x}=(X_1,...,X_k)$ and $ \boldsymbol{y}=(Y_1,...,Y_k)$ as follows: 
\[ \boldsymbol{x} \preceq \boldsymbol{y} \overset{\mathrm{def}}{\Longleftrightarrow} X_i\subseteq Y_i(i=1,...,k). \]
Also, for $\boldsymbol{x}=(X_1,...,X_k) \in (k+1)^{V}$, $e\notin \bigcup_{l=1}^k X_l$, and $i \in \{1,...,k\}$, define
\[ \Delta_{e,i}f(\boldsymbol{x})=f(X_1,...,X_{i-1},X_i\cup\{ e \},X_{i+1},...,X_k)-f(X_1,...,X_k). \]
 
A monotone $k$-submodular function is $k$-submodular and satisfies
$f(\boldsymbol{x})\leq f(\boldsymbol{y})$ for any $\boldsymbol{x}=(X_1,...,X_k)$ and  $\boldsymbol{y}=(Y_1,...,Y_k)$ in $(k+1)^{V}$ with $\boldsymbol{x} \preceq \boldsymbol{y}$.

The property of $k$-submodularity can be written as another form.

\begin{theorem}{\rm (\cite{Ward:2016:MKS:2983296.2850419} THEOREM 7)}
A function $f:(k+1)^{V} \to \mathbb{R}$ is $k$-submodular if and only if $f$ is orthant submodular and pairwise monotone. 

Note that orthant submodularity is to satisfy
\[ \Delta_{e,i} f(\boldsymbol{x}) \geq \Delta_{e,i}f(\boldsymbol{y})\ \  (\boldsymbol{x},\boldsymbol{y}\in (k+1)^{V},\ \boldsymbol{x} \preceq \boldsymbol{y},\ e \notin \bigcup_{l=1}^k Y_l, \ i \in \{1,...,k\} ), \]
and pairwise monotonicity is to satisfy 
\[ \Delta_{e,i}f(\boldsymbol{x}) + \Delta_{e,j}f(\boldsymbol{x}) \geq 0\  (\boldsymbol{x} \in (k+1)^{V},\ e \notin \bigcup_{l=1}^k X_l, \ i,j \in \{ 1,...,k \} \ (i\neq j)). \]
\end{theorem}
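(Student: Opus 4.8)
\textbf{The easy direction} ($k$-submodular $\Rightarrow$ orthant submodular and pairwise monotone) comes from two substitutions into the defining inequality. Given $\boldsymbol{x}\preceq\boldsymbol{y}$, $e\notin\bigcup_l Y_l$, and $i$, apply $k$-submodularity to $\boldsymbol{x}$ with $e$ inserted into its $i$-th coordinate and to $\boldsymbol{y}$; using $X_i\subseteq Y_i$ and $e\notin\bigcup_l Y_l$ one checks that their $\sqcap$ is $\boldsymbol{x}$ and their $\sqcup$ is $\boldsymbol{y}$ with $e$ inserted into its $i$-th coordinate, and the inequality rearranges to $\Delta_{e,i}f(\boldsymbol{x})\ge\Delta_{e,i}f(\boldsymbol{y})$. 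Given $\boldsymbol{x}$, $e\notin\bigcup_l X_l$, and $i\neq j$, apply $k$-submodularity to $\boldsymbol{x}$ with $e$ inserted into coordinate $i$ and to $\boldsymbol{x}$ with $e$ inserted into coordinate $j$; now both $\sqcap$ and $\sqcup$ collapse to $\boldsymbol{x}$ (the join discards $e$, since it would sit in two coordinates), and the inequality reads $\Delta_{e,i}f(\boldsymbol{x})+\Delta_{e,j}f(\boldsymbol{x})\ge 0$. These are routine calculations.

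\textbf{The converse} is the real content. I would prove $f(\boldsymbol{x})+f(\boldsymbol{y})\ge f(\boldsymbol{x}\sqcap\boldsymbol{y})+f(\boldsymbol{x}\sqcup\boldsymbol{y})$ by induction on $|V|$. For $e\in V$ let $x(e),y(e)\in\{0,\dots,k\}$ name the coordinate of $\boldsymbol{x}$, resp.\ $\boldsymbol{y}$, containing $e$ (with $0$ for ``none''). If some $e$ has $x(e)=y(e)=c$, I pin $e$ to coordinate $c$: the induced function $\bar f$ on $(k+1)^{V\setminus\{e\}}$ is again orthant submodular and pairwise monotone (its $\Delta$'s are $\Delta$'s of $f$ at extended configurations), restriction to $V\setminus\{e\}$ commutes with $\sqcap$ and $\sqcup$, and $e$ sits in coordinate $c$ in both the meet and the join, so the induction hypothesis for $\bar f$ gives the claim; call this the \emph{Agreement Lemma}. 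Otherwise $x(e)\neq y(e)$ for every $e$, so $\boldsymbol{x}\sqcap\boldsymbol{y}=\boldsymbol{0}:=(\emptyset,\dots,\emptyset)$ and I must show $f(\boldsymbol{x})+f(\boldsymbol{y})\ge f(\boldsymbol{0})+f(\boldsymbol{x}\sqcup\boldsymbol{y})$. Call $e$ a \emph{conflict} if $x(e)\neq 0\neq y(e)$. If there are no conflicts, then $\boldsymbol{0}\preceq\boldsymbol{x},\boldsymbol{y}\preceq\boldsymbol{x}\sqcup\boldsymbol{y}$ and $\boldsymbol{x}\sqcup\boldsymbol{y}$ is the coordinatewise union; building $\boldsymbol{x}$ from $\boldsymbol{0}$ and $\boldsymbol{x}\sqcup\boldsymbol{y}$ from $\boldsymbol{y}$ by inserting the same elements into the same coordinates one at a time, orthant submodularity bounds each insertion gain over $\boldsymbol{0}$ below by the corresponding one over $\boldsymbol{y}$, and telescoping gives $f(\boldsymbol{x})-f(\boldsymbol{0})\ge f(\boldsymbol{x}\sqcup\boldsymbol{y})-f(\boldsymbol{y})$.

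\textbf{The main obstacle} is the case where $\boldsymbol{x}$ and $\boldsymbol{y}$ differ everywhere and there is at least one conflict. When not every coordinate conflicts, I would delete all conflict elements from $\boldsymbol{x}$ to get $\boldsymbol{x}^{\downarrow}$: the pair $(\boldsymbol{x}^{\downarrow},\boldsymbol{y})$ has no conflict, so the no-conflict case applies, and a chain of orthant-submodularity steps (inserting the conflict elements back into $\boldsymbol{x}^{\downarrow}$, resp.\ into the meet and the join) reduces the claim to the $k$-submodular inequality for a pair that agrees on all non-conflict coordinates, which has a strictly smaller effective ground set by one more application of the Agreement Lemma. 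The genuinely delicate subcase is when \emph{every} coordinate conflicts, so $\boldsymbol{x}\sqcup\boldsymbol{y}=\boldsymbol{0}$ as well and one needs $f(\boldsymbol{x})+f(\boldsymbol{y})\ge 2f(\boldsymbol{0})$. Here I would pick a conflict $e$ with $x(e)=i$, $y(e)=j$, write $\boldsymbol{x}-e_i$ for $\boldsymbol{x}$ with $e$ deleted from coordinate $i$, and assemble the inductive inequalities for $(\boldsymbol{x}-e_i,\boldsymbol{y})$ and $(\boldsymbol{x},\boldsymbol{y}-e_j)$ (both now covered by the preceding not-every-coordinate-conflicts case) and for $(\boldsymbol{x}-e_i,\boldsymbol{y}-e_j)$ (Agreement Lemma), together with pairwise monotonicity at $\boldsymbol{0}$ and at the deleted configurations and with orthant submodularity between $\boldsymbol{0}$ and those configurations, into a single nonnegative combination yielding $f(\boldsymbol{x})+f(\boldsymbol{y})-2f(\boldsymbol{0})\ge 0$. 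The subtlety — and what I expect to be the crux — is that $\boldsymbol{x}-e_i$ and $\boldsymbol{y}-e_j$ are $\preceq$-incomparable once $|V|\ge 2$, so these inequalities cannot simply be added; one has to route through the intermediate configuration obtained from $\boldsymbol{x}$ by \emph{moving} $e$ from coordinate $i$ to coordinate $j$, which is the point at which the orthant-submodularity and pairwise-monotonicity inequalities at the right configurations make the error terms cancel.
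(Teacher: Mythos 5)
First, a point of reference: the paper does not prove this statement at all — it is quoted verbatim from the cited work of Ward and \v{Z}ivn\'y — so there is no in-paper proof to compare you against; I can only assess your argument on its own terms. Your easy direction is complete and correct: both substitutions into the defining inequality are the standard ones and the meet/join computations check out. For the converse, the architecture (induction on $|V|$, the Agreement Lemma, the telescoping argument for conflict-free pairs) is sound, and you have correctly located the crux. The gap is that the two conflict cases are plans rather than proofs, and the ingredient list you give for the all-conflict case does not assemble as claimed: adding the inequalities for $(\boldsymbol{x}-e_i,\boldsymbol{y})$, $(\boldsymbol{x},\boldsymbol{y}-e_j)$ and $(\boldsymbol{x}-e_i,\boldsymbol{y}-e_j)$ together with pairwise monotonicity at $\boldsymbol{0}$ leaves the term $f(\boldsymbol{x}-e_i)+f(\boldsymbol{y}-e_j)$ needing an \emph{upper} bound by $2f(\boldsymbol{0})$, while the Agreement Lemma only bounds it from \emph{below}. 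Your closing remark about routing through the moved configuration is exactly the missing step, but you do not carry it out. For the record it does close, with a different and shorter combination than the one you list: write $\boldsymbol{x}'=\boldsymbol{x}-e_i$, let $\boldsymbol{x}''$ be $\boldsymbol{x}'$ with $e$ inserted into coordinate $j$, and let $\boldsymbol{w}$ be the configuration assigning only $e$, to coordinate $j$. Then add (i) pairwise monotonicity at $\boldsymbol{x}'$, giving $f(\boldsymbol{x})+f(\boldsymbol{x}'')\geq 2f(\boldsymbol{x}')$; (ii) the inductive inequality for the agreeing pair $(\boldsymbol{x}'',\boldsymbol{y})$, giving $f(\boldsymbol{x}'')+f(\boldsymbol{y})\geq 2f(\boldsymbol{w})$; and (iii) twice the no-conflict inequality for $(\boldsymbol{x}',\boldsymbol{w})$, giving $f(\boldsymbol{x}')+f(\boldsymbol{w})\geq f(\boldsymbol{0})+f(\boldsymbol{x}'')$; all intermediate terms cancel and $f(\boldsymbol{x})+f(\boldsymbol{y})\geq 2f(\boldsymbol{0})$ follows.

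The ``not every element conflicts'' case has the same character of gap. Reinserting the conflict elements into $\boldsymbol{x}^{\downarrow}$ and into the join puts them into \emph{different} coordinates ($x(e)$ versus $y(e)$), so a ``chain of orthant-submodularity steps'' cannot compare the two increments; orthant submodularity only compares insertions into the \emph{same} coordinate at comparable base points. What works instead is to add the no-conflict inequality for $(\boldsymbol{x}^{\downarrow},\boldsymbol{y})$ to the inductive inequality for the pair $(\boldsymbol{x},\,\boldsymbol{z})$, where $\boldsymbol{z}$ is the join of $\boldsymbol{x}^{\downarrow}$ and $\boldsymbol{y}$: that pair agrees precisely on the support of $\boldsymbol{x}^{\downarrow}$, its meet is $\boldsymbol{x}^{\downarrow}$ and its join is $\boldsymbol{x}\sqcup\boldsymbol{y}$, and the Agreement Lemma applies — \emph{provided} $\boldsymbol{x}^{\downarrow}\neq\boldsymbol{0}$. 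When every non-conflict element is carried by $\boldsymbol{y}$ rather than $\boldsymbol{x}$, your reduction is circular ($\boldsymbol{z}=\boldsymbol{y}$) and you must delete the conflicts from $\boldsymbol{y}$ instead; this edge case needs to be stated. In short: the strategy is viable and reasonably elementary, but the two decisive combinations of inequalities are asserted rather than exhibited, and the one you do sketch in detail is not the one that works.
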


To analyze $k$-submodular functions, it is often convenient to identify $(k+1)^V$ as $\{0,1,...,k\}^V$. A $|V|$-dimensional vector $\boldsymbol{x} \in \{0,1,...,k\}^V$ is associated with $(X_1,...,X_n) \in (k+1)^V$ by $X_i = \{ e \in V\mid \boldsymbol{x}(e) = i \}$.

\section{Existing randomized algorithms}
\subsection{Algorithm framework}
In this section, we see the framework to maximize $k$-submodular functions (Algorithm \ref{metakmax} \cite{iwata2016improved}). \cite{iwata2013bisubmodular} and \cite{Ward:2016:MKS:2983296.2850419} used it with specific distributions.\\

\begin{algorithm}
\caption{(\cite{iwata2016improved}\ Algorithm 1)}\label{metakmax}
\begin{algorithmic}
\item{ {\bf Input:} A nonnegative $k$-submodular function $f:\{0,1,...,k\}^{V}\to \mathbb{R}_+$.}
\item{ {\bf Output:} A vector $\boldsymbol{s} \in \{0,1,...,k\}^{V}$.}
\State{$\boldsymbol{s} \gets \boldsymbol{0}\ $.}
\State{Denote the elements of $V$ by $e^{(1)},...,e^{(n)}\ (|V|=n)$.}
\For{$j=1,...,n$}
	\State{Set a probability distribution $p^{(j)}$ over $\{ 1,...,k\}$.}
	\State{Let $\boldsymbol{s}(e^{(j)}) \in \{1,...,k\}$ be chosen randomly with $\mathrm{Pr}[\boldsymbol{s}(e^{(j)})=i]=p_i^{(j)}$.}
\EndFor \\
\Return{$\boldsymbol{s}$}
\end{algorithmic}
\end{algorithm}

Algorithm \ref{metakmax} is not only used for monotone functions. However, in this paper, we only use it for monotone functions.

Now we define some variables to see Algorithm \ref{metakmax}. Let $\boldsymbol{o}$ be an optimal solution, and we write $\boldsymbol{s}^{(j)}$ as $\boldsymbol{s}$ at $j$-th iteration. 
Let other variables be as follows: 
\begin{eqnarray}
\boldsymbol{o}^{(j)}=(\boldsymbol{o}\sqcup\boldsymbol{s}^{(j)})\sqcup \boldsymbol{s}^{(j)} &,\  & \boldsymbol{t}^{(j-1)}(e)=
\begin{cases} \boldsymbol{o}^{(j)}(e) &(e\neq e^{(j)})  \\ 0 & (e= e^{(j)}) \end{cases}\nonumber \\
y_i^{(j)}=\Delta_{e^{(j)},i}f(\boldsymbol{s}^{(j-1)}) &,\ &
a_i^{(j)}=\Delta_{e^{(j)},i}f(\boldsymbol{t}^{(j-1)}) \nonumber
\end{eqnarray}

Algorithm \ref{metakmax} satisfies following lemma.

\begin{lemma}\label{ksub_c} {\rm (\cite{iwata2016improved}\ LEMMA 2.1.)}\\
Let $c\in \mathbb{R}_+$. Conditioning on $\boldsymbol{s}^{(j-1)}$, suppose that
\begin{equation}
\sum_{i=1}^k(a_{i^*}^{(j)}-a_{i}^{(j)})p_i^{(j)}\leq c \sum_{i=1}^k(y_{i}^{(j)}p_i^{(j)}) \nonumber
\end{equation}
holds for each $j$ with $1\leq j \leq n$, where $i^*=\boldsymbol{o}(e^{(j)})$. Then $\mathbb{E}[f(\boldsymbol{s})]\geq \frac{1}{1+c}f(\boldsymbol{o})$.
\end{lemma}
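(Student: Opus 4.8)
\noindent\textit{Proof proposal.}
The plan is to run the standard hybrid argument: follow the two quantities $f(\boldsymbol{s}^{(j)})$, which grows from $f(\boldsymbol{0})$ to $f(\boldsymbol{s})$, and $f(\boldsymbol{o}^{(j)})$, which decreases from $f(\boldsymbol{o})$ to $f(\boldsymbol{s})$, show that at each step the expected decrease of the latter is at most $c$ times the expected increase of the former, and then telescope. Note that $k$-submodularity is not used at all in this lemma; it is a pure bookkeeping statement about Algorithm~\ref{metakmax}, with orthant submodularity and pairwise monotonicity needed only later, to verify the displayed hypothesis for concrete distributions $p^{(j)}$.

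The first and most delicate step is a structural claim: for $\boldsymbol{o}^{(j)} = (\boldsymbol{o}\sqcup\boldsymbol{s}^{(j)})\sqcup\boldsymbol{s}^{(j)}$ and any element $e$, one has $\boldsymbol{o}^{(j)}(e) = \boldsymbol{s}^{(j)}(e)$ when $e \in \{e^{(1)},\dots,e^{(j)}\}$ and $\boldsymbol{o}^{(j)}(e) = \boldsymbol{o}(e)$ otherwise. I would prove this coordinatewise in the representation $\{0,1,\dots,k\}^V$: from the definition, $(\boldsymbol{x}\sqcup\boldsymbol{y})(e)$ equals the common value where $\boldsymbol{x}(e)=\boldsymbol{y}(e)$, the nonzero one where exactly one of $\boldsymbol{x}(e),\boldsymbol{y}(e)$ is $0$, and $0$ where they are distinct and both nonzero; applying this with $\boldsymbol{y}=\boldsymbol{s}^{(j)}$ twice shows the second copy of $\sqcup\boldsymbol{s}^{(j)}$ overwrites every ``conflict'' coordinate by the value of $\boldsymbol{s}^{(j)}$, yielding the claim. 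Two consequences matter. First, $\boldsymbol{o}^{(0)} = \boldsymbol{o}$ and $\boldsymbol{o}^{(n)} = \boldsymbol{s}^{(n)} = \boldsymbol{s}$. Second, $\boldsymbol{o}^{(j-1)}$ and $\boldsymbol{o}^{(j)}$ agree off $e^{(j)}$, with $\boldsymbol{o}^{(j-1)}(e^{(j)}) = \boldsymbol{o}(e^{(j)}) = i^*$ and $\boldsymbol{o}^{(j)}(e^{(j)}) = \boldsymbol{s}^{(j)}(e^{(j)})$, and $\boldsymbol{t}^{(j-1)}$ is precisely this common vector with coordinate $e^{(j)}$ reset to $0$; in particular $\boldsymbol{t}^{(j-1)}$ --- hence the numbers $a_i^{(j)}$ and $y_i^{(j)}$, the distribution $p^{(j)}$, and the index $i^*$ --- is determined by $\boldsymbol{s}^{(j-1)}$ and does not depend on the $j$-th random draw.

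With this in hand I would compute the two per-step conditional expectations, conditioning on $\boldsymbol{s}^{(j-1)}$. Because $\boldsymbol{s}^{(j)}$ differs from $\boldsymbol{s}^{(j-1)}$ only at $e^{(j)}$, where it equals $i$ with probability $p_i^{(j)}$,
\[ \mathbb{E}\bigl[f(\boldsymbol{s}^{(j)}) - f(\boldsymbol{s}^{(j-1)}) \mid \boldsymbol{s}^{(j-1)}\bigr] = \sum_{i=1}^{k} p_i^{(j)} y_i^{(j)} . \]
Because $\boldsymbol{o}^{(j-1)}$ and $\boldsymbol{o}^{(j)}$ both arise from $\boldsymbol{t}^{(j-1)}$ by filling coordinate $e^{(j)}$, with value $i^*$ in the first case and value $\boldsymbol{s}^{(j)}(e^{(j)})$ in the second, we have $f(\boldsymbol{o}^{(j-1)}) = f(\boldsymbol{t}^{(j-1)}) + a_{i^*}^{(j)}$ and, on the event $\{\boldsymbol{s}^{(j)}(e^{(j)}) = i\}$, $f(\boldsymbol{o}^{(j)}) = f(\boldsymbol{t}^{(j-1)}) + a_i^{(j)}$; hence, using $\sum_i p_i^{(j)} = 1$,
\[ \mathbb{E}\bigl[f(\boldsymbol{o}^{(j-1)}) - f(\boldsymbol{o}^{(j)}) \mid \boldsymbol{s}^{(j-1)}\bigr] = a_{i^*}^{(j)} - \sum_{i=1}^{k} p_i^{(j)} a_i^{(j)} = \sum_{i=1}^{k} \bigl(a_{i^*}^{(j)} - a_i^{(j)}\bigr) p_i^{(j)} . \]
(If $i^* = 0$ one uses the convention $a_0^{(j)} := \Delta_{e^{(j)},0}f(\boldsymbol{t}^{(j-1)}) = 0$, consistent since then $\boldsymbol{o}^{(j-1)}(e^{(j)}) = 0 = \boldsymbol{t}^{(j-1)}(e^{(j)})$; for monotone $f$ one may simply take $\boldsymbol{o}$ with no zero coordinate, so this case never occurs.) By the hypothesis of the lemma, the right-hand side of the last display is at most $c$ times $\sum_i p_i^{(j)} y_i^{(j)}$.

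To finish, I would take total expectations in the per-step inequality (it holds for every realization of $\boldsymbol{s}^{(j-1)}$) and sum over $j = 1,\dots,n$. The left-hand sides telescope to $f(\boldsymbol{o}^{(0)}) - \mathbb{E}[f(\boldsymbol{o}^{(n)})] = f(\boldsymbol{o}) - \mathbb{E}[f(\boldsymbol{s})]$, and the right-hand sides to $c\bigl(\mathbb{E}[f(\boldsymbol{s}^{(n)})] - f(\boldsymbol{s}^{(0)})\bigr) = c\bigl(\mathbb{E}[f(\boldsymbol{s})] - f(\boldsymbol{0})\bigr)$, which is at most $c\,\mathbb{E}[f(\boldsymbol{s})]$ since $f$ is nonnegative and $c \geq 0$. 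Therefore $f(\boldsymbol{o}) - \mathbb{E}[f(\boldsymbol{s})] \leq c\,\mathbb{E}[f(\boldsymbol{s})]$, i.e.\ $\mathbb{E}[f(\boldsymbol{s})] \geq \frac{1}{1+c} f(\boldsymbol{o})$. I expect the only real obstacle to be the structural claim of the second paragraph --- specifically, verifying that the coordinates of $\boldsymbol{o}^{(j)}$ other than $e^{(j)}$ are unaffected by the $j$-th random choice, so that $a_i^{(j)}$ is measurable with respect to $\boldsymbol{s}^{(j-1)}$; granting that, the remaining computations are routine.
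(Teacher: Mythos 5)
Your proof is correct and takes essentially the same route as the paper: the lemma is quoted from \cite{iwata2016improved}, and your hybrid/telescoping argument with $\boldsymbol{o}^{(j)}$ is exactly the argument the paper itself reproduces (in derandomized form, with $\boldsymbol{o}[\boldsymbol{s}]$ in place of $\boldsymbol{o}^{(j)}$) in its proof of Theorem 3. Your verification of the structural claim about $(\boldsymbol{o}\sqcup\boldsymbol{s})\sqcup\boldsymbol{s}$, the measurability of $a_i^{(j)}$ with respect to $\boldsymbol{s}^{(j-1)}$, and the $i^*=0$ corner case are all sound.
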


\subsection{A randomized algorithm for monotone functions}
In \cite{iwata2016improved}, a randomized $\frac{k}{2k-1}$-approximation algorithm for monotone $k$-submodular functions (Algorithm \ref{kmax}) is shown.  

\begin{algorithm}
\caption{(\cite{iwata2016improved}\ Algorithm 3)}\label{kmax}
\begin{algorithmic}
\item{ {\bf Input:} A monotone $k$-submodular function $f:\{0,1,...,k\}^{V}\to \mathbb{R}_+$.}
\item{ {\bf Output:} A vector $\boldsymbol{s} \in \{0,1,...,k\}^{V}$.}
\State{$\boldsymbol{s} \gets \boldsymbol{0},t\gets k-1$.}
\State{Denote the elements of $V$ by $e^{(1)},...,e^{(n)}\ (|V|=n)$.}
\For{$j=1,...,n$}
	\State{$y_i^{(j)} \gets \Delta_{e^{(j)},i}f(\boldsymbol{s})\ \ (1\leq i\leq k)$.}
	\State{$\beta \gets \sum^k_{i=1}(y_i^{(j)})^t$.}
	\If{$\beta \neq 0$}{\ \ $p_i^{(j)} \gets (y_i^{(j)})^t/\beta \ \ (1\leq i\leq k)$.}
	\Else 
	\State{$p_1^{(j)}=1,p_i^{(j)}=0\ (i=2,...,k)$.}
	\EndIf
	\State{Let$\boldsymbol{s}(e^{(j)}) \in \{1,...,k\}$ be chosen randomly with $\mathrm{Pr}[\boldsymbol{s}(e^{(j)})=i]=p_i^{(j)}$.}
\EndFor \\
\Return{$\boldsymbol{s}$}
\end{algorithmic}
\end{algorithm}

Algorithm \ref{kmax} runs in polynomial time. The approximation ratio of Algorithm \ref{kmax}  satisfies the theorem below.

\begin{theorem}{\rm (\cite{iwata2016improved}\ THEOREM 2.2.)}
Let $\boldsymbol{o}$ be a maximizer of a monotone $k$-submodular function $f$ and let $\boldsymbol{s}$ be the output of Algorithm \ref{kmax}. Then $\mathbb{E}[f(\boldsymbol{s})]\geq \frac{k}{2k-1}f(\boldsymbol{o})$.
\end{theorem}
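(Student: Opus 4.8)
The plan is to apply Lemma~\ref{ksub_c} with $c = \frac{k-1}{k}$, since then $\frac{1}{1+c} = \frac{k}{2k-1}$, and since Algorithm~\ref{kmax} is exactly the instance of Algorithm~\ref{metakmax} in which the distribution at step $j$ is $p_i^{(j)} = (y_i^{(j)})^{t}/\beta$ with $t = k-1$ and $\beta = \sum_{i=1}^k (y_i^{(j)})^{k-1}$ (with $p_1^{(j)} = 1$ when $\beta = 0$). So the whole proof reduces to checking, conditioned on $\boldsymbol{s}^{(j-1)}$, the single inequality
\[ \sum_{i=1}^k \bigl(a_{i^*}^{(j)} - a_i^{(j)}\bigr)\,p_i^{(j)} \;\le\; \frac{k-1}{k}\sum_{i=1}^k y_i^{(j)}\,p_i^{(j)}, \qquad i^* := \boldsymbol{o}(e^{(j)}), \]
for each $j$ with $1 \le j \le n$.

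Before estimating, I would record three elementary facts. First, monotonicity of $f$ gives $y_i^{(j)} \ge 0$ and $a_i^{(j)} \ge 0$ for all $i$. Second, $a_i^{(j)} \le y_i^{(j)}$ for all $i$: unwinding the double-$\sqcup$ definition $\boldsymbol{o}^{(j)} = (\boldsymbol{o} \sqcup \boldsymbol{s}^{(j)}) \sqcup \boldsymbol{s}^{(j)}$ one verifies $\boldsymbol{o}^{(j)} \succeq \boldsymbol{s}^{(j)} \succeq \boldsymbol{s}^{(j-1)}$, and since $\boldsymbol{t}^{(j-1)}$ agrees with $\boldsymbol{o}^{(j)}$ off $e^{(j)}$ while $\boldsymbol{t}^{(j-1)}(e^{(j)}) = \boldsymbol{s}^{(j-1)}(e^{(j)}) = 0$, this yields $\boldsymbol{s}^{(j-1)} \preceq \boldsymbol{t}^{(j-1)}$ with $e^{(j)}$ outside the support of $\boldsymbol{t}^{(j-1)}$; orthant submodularity then gives $a_i^{(j)} = \Delta_{e^{(j)},i}f(\boldsymbol{t}^{(j-1)}) \le \Delta_{e^{(j)},i}f(\boldsymbol{s}^{(j-1)}) = y_i^{(j)}$. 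Third, the degenerate cases are trivial: if $\beta = 0$ then every $y_i^{(j)} = 0$, hence (by the second fact and nonnegativity) every $a_i^{(j)} = 0$ and both sides of the target inequality vanish; and if $i^* = 0$, reading $a_{i^*}^{(j)}$ as $0$, the left side is $-\sum_i a_i^{(j)} p_i^{(j)} \le 0$.

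For the main case ($\beta \neq 0$ and $i^* \in \{1,\dots,k\}$), the $i = i^*$ term on the left vanishes, and for $i \neq i^*$ we bound $a_{i^*}^{(j)} - a_i^{(j)} \le a_{i^*}^{(j)} \le y_{i^*}^{(j)}$ using nonnegativity of $a_i^{(j)}$ and the second fact; hence the left side is at most $y_{i^*}^{(j)}\sum_{i \neq i^*} p_i^{(j)} = y_{i^*}^{(j)}\bigl(1 - p_{i^*}^{(j)}\bigr) = y_{i^*}^{(j)}\,\beta^{-1}\sum_{i \neq i^*}(y_i^{(j)})^{k-1}$. Since the right side equals $\tfrac{k-1}{k}\,\beta^{-1}\sum_{i=1}^k (y_i^{(j)})^k$, it remains to prove the algebraic inequality
\[ k\,y_{i^*}^{(j)}\sum_{i \neq i^*}\bigl(y_i^{(j)}\bigr)^{k-1} \;\le\; (k-1)\sum_{i=1}^k \bigl(y_i^{(j)}\bigr)^{k}. \]
Write $a := y_{i^*}^{(j)} \ge 0$ and let $b_1,\dots,b_{k-1} \ge 0$ be the other $y_i^{(j)}$. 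By weighted AM--GM, $a\,b_\ell^{\,k-1} \le \tfrac1k a^k + \tfrac{k-1}{k} b_\ell^{\,k}$ for each $\ell$; summing over $\ell = 1,\dots,k-1$ gives $a\sum_\ell b_\ell^{\,k-1} \le \tfrac{k-1}{k}\bigl(a^k + \sum_\ell b_\ell^{\,k}\bigr) = \tfrac{k-1}{k}\sum_{i=1}^k (y_i^{(j)})^k$, which is exactly the claim. Lemma~\ref{ksub_c} then gives $\mathbb{E}[f(\boldsymbol{s})] \ge \tfrac{k}{2k-1}f(\boldsymbol{o})$.

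The calculations are short, so the only delicate point is the second fact --- pinning down precisely why $\boldsymbol{s}^{(j-1)} \preceq \boldsymbol{t}^{(j-1)}$ given the somewhat unusual $\boldsymbol{o}^{(j)} = (\boldsymbol{o} \sqcup \boldsymbol{s}^{(j)}) \sqcup \boldsymbol{s}^{(j)}$: the outer $\sqcup\,\boldsymbol{s}^{(j)}$ is what forces $\boldsymbol{o}^{(j)}$ to dominate $\boldsymbol{s}^{(j)}$, and checking this needs a coordinatewise argument that the $i$-th block of $\boldsymbol{o} \sqcup \boldsymbol{s}^{(j)}$ is disjoint from the $\ell$-th block of $\boldsymbol{s}^{(j)}$ for $\ell \neq i$. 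Everything else is bookkeeping together with the one-line AM--GM. It is worth noting that the exponent $t = k-1$ in Algorithm~\ref{kmax} is tuned exactly so that this AM--GM step is tight (equality when all $y_i^{(j)}$ are equal), which is the source of the ratio $\tfrac{k}{2k-1}$.
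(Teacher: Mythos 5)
Your proof is correct and follows exactly the route the paper indicates: apply Lemma~\ref{ksub_c} with $c=1-\tfrac{1}{k}$, use $a_i^{(j)}\ge 0$ (monotonicity) and $a_i^{(j)}\le y_i^{(j)}$ (orthant submodularity) to reduce everything to inequality (1), and verify (1) for the distribution $p_i^{(j)}\propto (y_i^{(j)})^{k-1}$. The paper itself only sketches this argument (deferring to the cited reference); your weighted AM--GM verification of the remaining algebraic inequality and your handling of the degenerate cases correctly supply the omitted details.
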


In the proof of this theorem (see \cite{iwata2016improved}), the inequality of Lemma \ref{ksub_c} is proved with $c=1-\frac{1}{k}$. We get $a_{i} \geq 0\ (\forall i\in \{1,...,k\})$ from monotonicity, and $a_{i} \leq y_i \ (\forall i\in \{1,...,k\})$ from orthant submodularity. Hence, the inequality
\begin{equation}
\sum_{i\neq i^*}(y_{i^*}^{(j)}p_i^{(j)})\leq \left(1-\frac{1}{k}\right) \sum_{i=1}^k(y_{i}^{(j)}p_i^{(j)}) 
\end{equation}
is used. The inequality of Lemma \ref{ksub_c} is satisfied when the inequality (1) is valid. 

\section{Deterministic algorithm}
In this section, we give a polynomial-time deterministic algorithm for maximizing monotone $k$-submodular functions.  Our algorithm is Algorithm \ref{Dkmax}. Algorithm \ref{Dkmax} is a derandomized version of  Algorithm \ref{kmax}. We note the derandomization scheme of  this algorithm is from \cite{buchbinder2016deterministic}. \\

\begin{algorithm}
\caption{A deterministic algorithm}\label{Dkmax}
\begin{algorithmic}
\item{ {\bf Input:} A monotone $k$-submodular function $f:\{0,1,...,k\}^{V}\to \mathbb{R}_+$.}
\item{ {\bf Output:} A vector $\boldsymbol{s} \in \{0,1,...,k\}^{V}$.}
\State{$\mathcal{D}_{0}\gets(1,\ \boldsymbol{0}),\ (\mathcal{D}=\{ (p,\boldsymbol{s}) \mid \boldsymbol{s}\in (k+1)^{V},\ 0\leq p\leq 1\}\ (\sum_{\boldsymbol{s} \in \mathcal{D}} p =1))$.}
\State{Denote the elements of $V$ by $e^{(1)},...,e^{(n)}\ (|V|=n)$.}
\For{$j=1,...,n$}
	\State{$y_i(\boldsymbol{s}) \gets \Delta_{e^{(j)},i}f(\boldsymbol{s})\ (\forall \boldsymbol{s} \in \mathrm{supp}(\mathcal{D}_{j-1}) ,\  i \in \{1,...,k\})$.}
	\State{Find an extreme point solution $(p_{i,\boldsymbol{s}})_{i=1,\ldots,k,\ \boldsymbol{s} \in \mathrm{supp}(\mathcal{D}_{j-1})}$of the following linear formulation:
\begin{eqnarray}
  \left(1-\frac{1}{k}\right)  \mathbb{E}_{\boldsymbol{s}\sim \mathcal{D}_{j-1}}\left[\sum_{i=1}^k p_{i,\boldsymbol{s}}y_{i}(\boldsymbol{s}) \right]&\geq &\mathbb{E}_{\boldsymbol{s}\sim \mathcal{D}_{j-1}}\left[(1-p_{l,\boldsymbol{s}})y_{l}(\boldsymbol{s})\right] \\
	&&\ \ \ \ (l\in \{1,...,k\}) \nonumber \\
   \sum_{i=1}^k p_{i,\boldsymbol{s}}&=&1\ \ ( \forall \boldsymbol{s} \in \mathrm{supp}(\mathcal{D}_{j-1})) \\
	p_{i,\boldsymbol{s}}&\geq&0\ \ ( \forall \boldsymbol{s} \in \mathrm{supp}(\mathcal{D}_{j-1}),\ i \in \{1,...,k\}).  
\end{eqnarray}}
\State{Construct a new distribution $\mathcal{D}_j$:
\begin{eqnarray}
\mathcal{D}_j\gets  \bigcup_{i=1}^k \{ (p_{i,\boldsymbol{s}} \cdot \mathrm{Pr}_{\mathcal{D}_{j-1}}[\boldsymbol{s}],\ \boldsymbol{s}_{e^{(j)},i}) \mid \boldsymbol{s} \in \mathrm{supp}(\mathcal{D}_{j-1}),\ p_{i,\boldsymbol{s}}>0 \} 
\end{eqnarray}
\[ \left(\boldsymbol{s}_{e^{(j)},i}(e)=
\begin{cases} \boldsymbol{s}(e) &(e\neq e^{(j)})  \\ i & (e= e^{(j)}) \end{cases}\right).\]
}
\EndFor \\
\Return{$\argmax_{\boldsymbol{s} \in \mathrm{supp}(\mathcal{D}_{n})} \{f(\boldsymbol{s})\}$}
\end{algorithmic}
\end{algorithm}

In the algorithm, we construct a distribution $\mathcal{D}$ which satisfies $\mathbb{E}_{\boldsymbol{s} \sim \mathcal{D}}[f(\boldsymbol{s})]\geq \frac{k}{2k-1}f(\boldsymbol{o})$. Then the algorithm outputs the best solution in $\mathrm{supp}(\mathcal{D}):=\{ \boldsymbol{s} \mid (p,\boldsymbol{s}) \in \mathcal{D} \}$. We can see the right hand side of (2) in Algorithm \ref{Dkmax} is the expected value of the left hand side  of (1) for $\boldsymbol{s} \sim \mathcal{D}_{j-1}$ with $i^* = l$. it is because $\sum_{i\neq l} p_{i, \boldsymbol{s}}y_l(\boldsymbol{s}) = (1-p_{l, \boldsymbol{s}})y_l(\boldsymbol{s})$. Also the left hand side of (2) is the expected value of the right hand side  of (1) with $c=1-1/k$. From (3) and (4), $\mathcal{D}_{j}$ in (5) is constructed as a distribution.\\

Algorithm \ref{Dkmax} achieves the same approximation ratio as Algorithm \ref{kmax}.
\begin{theorem}
Let $\boldsymbol{o}$ be a maximizer of  a monotone nonnegative k-submodular function $f$ and let $\boldsymbol{z}$ be the output of Algorithm \ref{Dkmax}. Then $f(\boldsymbol{z})\geq \frac{k}{2k-1}f(\boldsymbol{o})$.
\end{theorem}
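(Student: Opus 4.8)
\emph{Proof plan.} The plan is to reproduce, at the level of the distributions $\mathcal{D}_0,\dots,\mathcal{D}_n$ built by Algorithm \ref{Dkmax}, the telescoping argument behind Lemma \ref{ksub_c} and Theorem 2.2 of \cite{iwata2016improved}: as observed right after the description of the algorithm, constraint (2) is exactly the expectation over $\boldsymbol{s}\sim\mathcal{D}_{j-1}$ of the one-step inequality (1) (with $i^{*}=l$ and $c=1-1/k$), so once each $\mathcal{D}_j$ is well defined the analysis of \cite{iwata2016improved} should carry over in expectation. The first point is that the linear formulation (2)--(4) is feasible, so that an extreme point exists and each iteration makes sense. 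For this I would, for each $\boldsymbol{s}\in\mathrm{supp}(\mathcal{D}_{j-1})$, use the probabilities of Algorithm \ref{kmax}, i.e.\ $p_{i,\boldsymbol{s}}=y_i(\boldsymbol{s})^{k-1}/\sum_{l}y_l(\boldsymbol{s})^{k-1}$ when the denominator is nonzero and $p_{1,\boldsymbol{s}}=1$ otherwise. Since $y_i(\boldsymbol{s})\ge0$ by monotonicity, the weighted AM--GM inequality $y_l(\boldsymbol{s})\,y_i(\boldsymbol{s})^{k-1}\le\frac1k y_l(\boldsymbol{s})^{k}+\frac{k-1}{k}y_i(\boldsymbol{s})^{k}$ summed over $i\neq l$ gives $(1-p_{l,\boldsymbol{s}})y_l(\boldsymbol{s})\le\bigl(1-\frac1k\bigr)\sum_i p_{i,\boldsymbol{s}}y_i(\boldsymbol{s})$ for \emph{every} $l\in\{1,\dots,k\}$ (this is where the factor $\frac{k}{2k-1}$ comes from, and strengthening it to all $l$, not merely $l=\boldsymbol{o}(e^{(j)})$, is exactly what the linear program needs); taking the expectation over $\boldsymbol{s}\sim\mathcal{D}_{j-1}$ shows this assignment satisfies (2), while (3)--(4) hold by construction. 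Rewriting (2) with $k$ slack variables makes (2)--(4) a system with $k+|\mathrm{supp}(\mathcal{D}_{j-1})|$ equalities, so any extreme point has at most that many nonzero coordinates; hence $|\mathrm{supp}(\mathcal{D}_j)|\le|\mathrm{supp}(\mathcal{D}_{j-1})|+k$ and $|\mathrm{supp}(\mathcal{D}_n)|\le1+nk$, which is what keeps Algorithm \ref{Dkmax} polynomial.

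Next I would prove a one-step inequality relating $\Psi_j:=\mathbb{E}_{\boldsymbol{s}\sim\mathcal{D}_j}[f(\boldsymbol{s})]$ and $\Phi_j:=\mathbb{E}_{\boldsymbol{s}\sim\mathcal{D}_j}[f(\boldsymbol{o}^{(j)}(\boldsymbol{s}))]$, where $\boldsymbol{o}^{(m)}(\boldsymbol{w}):=(\boldsymbol{o}\sqcup\boldsymbol{w})\sqcup\boldsymbol{w}$ for $\boldsymbol{w}\in\mathrm{supp}(\mathcal{D}_m)$. At iteration $j$, for $\boldsymbol{s}\in\mathrm{supp}(\mathcal{D}_{j-1})$ let $\boldsymbol{t}(\boldsymbol{s})$ be $\boldsymbol{o}^{(j-1)}(\boldsymbol{s})$ with the $e^{(j)}$-coordinate reset to $0$, and put $y_i(\boldsymbol{s})=\Delta_{e^{(j)},i}f(\boldsymbol{s})$, $a_i(\boldsymbol{s})=\Delta_{e^{(j)},i}f(\boldsymbol{t}(\boldsymbol{s}))$, $a_0(\boldsymbol{s}):=0$ and $i^{*}=\boldsymbol{o}(e^{(j)})$. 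Because $\sqcup$ acts coordinate by coordinate and $\boldsymbol{s}(e^{(j)})=0$, one checks that $\boldsymbol{o}^{(j-1)}(\boldsymbol{s})$ is $\boldsymbol{t}(\boldsymbol{s})$ with the $e^{(j)}$-coordinate set to $i^{*}$, and that the child $\boldsymbol{s}_{e^{(j)},i}$ from (5) satisfies that $\boldsymbol{o}^{(j)}(\boldsymbol{s}_{e^{(j)},i})$ is $\boldsymbol{t}(\boldsymbol{s})$ with the $e^{(j)}$-coordinate set to $i$; hence $f(\boldsymbol{o}^{(j-1)}(\boldsymbol{s}))-f(\boldsymbol{o}^{(j)}(\boldsymbol{s}_{e^{(j)},i}))=a_{i^{*}}(\boldsymbol{s})-a_i(\boldsymbol{s})$ and $f(\boldsymbol{s}_{e^{(j)},i})-f(\boldsymbol{s})=y_i(\boldsymbol{s})$. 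Averaging these against $p_{i,\boldsymbol{s}}$, then over $\boldsymbol{s}\sim\mathcal{D}_{j-1}$, and using the transition (5), gives
\[
\Psi_j-\Psi_{j-1}=\mathbb{E}_{\boldsymbol{s}\sim\mathcal{D}_{j-1}}\!\Bigl[\textstyle\sum_i p_{i,\boldsymbol{s}}y_i(\boldsymbol{s})\Bigr],\qquad
\Phi_{j-1}-\Phi_j=\mathbb{E}_{\boldsymbol{s}\sim\mathcal{D}_{j-1}}\!\Bigl[\textstyle\sum_i p_{i,\boldsymbol{s}}\bigl(a_{i^{*}}(\boldsymbol{s})-a_i(\boldsymbol{s})\bigr)\Bigr].
\]
Since $0\le a_i(\boldsymbol{s})\le y_i(\boldsymbol{s})$ for all $i$ (by monotonicity and orthant submodularity, exactly as in the analysis recalled after Theorem 2.2), one has $\sum_i p_{i,\boldsymbol{s}}(a_{i^{*}}(\boldsymbol{s})-a_i(\boldsymbol{s}))\le(1-p_{i^{*},\boldsymbol{s}})y_{i^{*}}(\boldsymbol{s})$ pointwise (and $\le0$ when $i^{*}=0$), so taking expectations and applying (2) with $l=i^{*}$ yields $\Phi_{j-1}-\Phi_j\le\bigl(1-\frac1k\bigr)(\Psi_j-\Psi_{j-1})$ for every $j$.

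It then remains to telescope. Summing over $j=1,\dots,n$ gives $\Phi_0-\Phi_n\le(1-\frac1k)(\Psi_n-\Psi_0)$. Here $\Phi_0=f(\boldsymbol{o})$ (since $\boldsymbol{o}^{(0)}(\boldsymbol{0})=\boldsymbol{o}$ and $\mathcal{D}_0=\{(1,\boldsymbol{0})\}$), $\Psi_0=f(\boldsymbol{0})\ge0$, and $\Phi_n=\Psi_n$ because after $n$ iterations every $\boldsymbol{w}\in\mathrm{supp}(\mathcal{D}_n)$ has all coordinates in $\{1,\dots,k\}$, so $\boldsymbol{o}^{(n)}(\boldsymbol{w})=\boldsymbol{w}$. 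Thus $f(\boldsymbol{o})-\Psi_n\le(1-\frac1k)\Psi_n$, i.e.\ $\Psi_n=\mathbb{E}_{\boldsymbol{s}\sim\mathcal{D}_n}[f(\boldsymbol{s})]\ge\frac{k}{2k-1}f(\boldsymbol{o})$; and since $\boldsymbol{z}=\argmax_{\boldsymbol{s}\in\mathrm{supp}(\mathcal{D}_n)}f(\boldsymbol{s})$, we get $f(\boldsymbol{z})\ge\mathbb{E}_{\boldsymbol{s}\sim\mathcal{D}_n}[f(\boldsymbol{s})]\ge\frac{k}{2k-1}f(\boldsymbol{o})$. I expect the main obstacle to be the middle step --- checking that the coordinate-wise behaviour of $\sqcup$ makes the per-sample objects $\boldsymbol{o}^{(j)}(\boldsymbol{s})$, $\boldsymbol{t}(\boldsymbol{s})$, $y_i(\boldsymbol{s})$, $a_i(\boldsymbol{s})$ transform under the split (5) exactly as the single-path quantities preceding Lemma \ref{ksub_c} transform in one iteration of Algorithm \ref{kmax}, so that the one-step estimate of \cite{iwata2016improved} lifts verbatim to an expectation over $\mathcal{D}_{j-1}$; the feasibility/extreme-point step and the telescoping are then routine.
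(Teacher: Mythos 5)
Your proposal is correct and follows essentially the same route as the paper: the same per-sample identities $f(\boldsymbol{o}[\boldsymbol{s}])-f(\boldsymbol{o}[\boldsymbol{s}_{e^{(j)},i}])=a_{i^*}(\boldsymbol{s})-a_i(\boldsymbol{s})$ and $0\le a_i\le y_i$, the same use of constraint (2) with $l=i^*$, and the same telescoping of $\mathbb{E}_{\mathcal{D}_j}[f(\boldsymbol{s})]$ against $\mathbb{E}_{\mathcal{D}_j}[f(\boldsymbol{o}[\boldsymbol{s}])]$. The only additions are details the paper defers elsewhere (the AM--GM feasibility check, which appears in the query-complexity theorem and in the cited analysis of Algorithm \ref{kmax}) or leaves implicit (the case $i^*=0$), and these are handled correctly.
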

\begin{proof}
We consider the $j$-th iteration. From (5), we get
\begin{eqnarray}
  \mathbb{E}_{\boldsymbol{s}\sim \mathcal{D}_{j-1}}\left[\sum_{i=1}^k p_{i,\boldsymbol{s}}y_{i}(\boldsymbol{s}) \right]&=&\mathbb{E}_{\boldsymbol{s}\sim \mathcal{D}_{j-1}}\left[\sum_{i=1}^k p_{i,\boldsymbol{s}}(f(\boldsymbol{s}_{e^{(j)},i})-f(\boldsymbol{s}) )\right] \nonumber  \\
&=&\mathbb{E}_{\boldsymbol{s}\sim \mathcal{D}_{j-1}}\left[\sum_{i=1}^k p_{i,\boldsymbol{s}}f(\boldsymbol{s}_{e^{(j)},i})-f(\boldsymbol{s}) \right]  \nonumber  \\
&=&\mathbb{E}_{\boldsymbol{s}'\sim \mathcal{D}_{j}}\left[ f(\boldsymbol{s}')\right]-\mathbb{E}_{\boldsymbol{s}\sim \mathcal{D}_{j-1}}\left[f(\boldsymbol{s}) \right].   
\end{eqnarray}
Now, we consider $\boldsymbol{o}[\boldsymbol{s}]:=(\boldsymbol{o} \sqcup \boldsymbol{s})\sqcup \boldsymbol{s}$.
Define the variables as follows:
\begin{eqnarray}
\boldsymbol{r}(e)&=&
\begin{cases} \boldsymbol{o}[\boldsymbol{s}](e) &(e\neq e^{(j)})  \\ 0 & (e= e^{(j)}) \end{cases} \nonumber \\
a_i(\boldsymbol{s})&=&\Delta_{e^{(j)},i}f(\boldsymbol{r}) \nonumber
\end{eqnarray}
Then we have
\begin{eqnarray}
f(\boldsymbol{o}[\boldsymbol{s}])-f(\boldsymbol{o}[\boldsymbol{s}_{e^{(j)},i}])&=&a_{i^*}(\boldsymbol{s})-a_{i}(\boldsymbol{s})\ \ (i^*=\boldsymbol{o}(e^{(j)}))
\end{eqnarray}
From monotonicity and orthant submodularity of $f$, we have
\begin{eqnarray}
a_{i^*}(\boldsymbol{s})-a_{i}(\boldsymbol{s})\leq y_{i^*}(\boldsymbol{s}).
\end{eqnarray}
From (7) and (8), we get
\begin{eqnarray}
\mathbb{E}_{\boldsymbol{s}\sim \mathcal{D}_{j-1}}\left[ f(\boldsymbol{o}[\boldsymbol{s}])\right]-\mathbb{E}_{\boldsymbol{s}'\sim \mathcal{D}_{j}}\left[f(\boldsymbol{o}[\boldsymbol{s}']) \right]&=&\mathbb{E}_{\boldsymbol{s}\sim \mathcal{D}_{j-1}}\left[\sum_{i=1}^k p_{i,\boldsymbol{s}}f(\boldsymbol{o}[\boldsymbol{s}])-f(\boldsymbol{o}[\boldsymbol{s}_{e^{(j)},i}]) \right] \nonumber \\
&=&\mathbb{E}_{\boldsymbol{s}\sim \mathcal{D}_{j-1}}\left[\sum_{i=1}^k p_{i,\boldsymbol{s}}\left( f(\boldsymbol{o}[\boldsymbol{s}])-f(\boldsymbol{o}[\boldsymbol{s}_{e^{(j)},i}])\right)  \right] \nonumber \\
&=&\mathbb{E}_{\boldsymbol{s}\sim \mathcal{D}_{j-1}}\left[\sum_{i\neq i^*} p_{i,\boldsymbol{s}}\left( a_{i^*}(\boldsymbol{s})-a_{i}(\boldsymbol{s}) \right)  \right] \nonumber \\
&\leq &\mathbb{E}_{\boldsymbol{s}\sim \mathcal{D}_{j-1}}\left[\sum_{i\neq i^*} p_{i,\boldsymbol{s}}\left( y_{i^*}(\boldsymbol{s}) \right)  \right] \nonumber \\
&=&\mathbb{E}_{\boldsymbol{s}\sim \mathcal{D}_{j-1}}\left[(1-p_{i^*,\boldsymbol{s}})\left( y_{i^*}(\boldsymbol{s}) \right)  \right] .
\end{eqnarray}
$p_{i,\boldsymbol{s}}$ satisfies (2) for all $l \in \{1,2,...,k\}$.  Hence we obtain 
\begin{equation}
\left(1-\frac{1}{k}\right)\left( \mathbb{E}_{\boldsymbol{s}'\sim \mathcal{D}_{j}}\left[ f(\boldsymbol{s}')\right]-\mathbb{E}_{\boldsymbol{s}\sim \mathcal{D}_{j-1}}\left[f(\boldsymbol{s}) \right]\right) \geq \mathbb{E}_{\boldsymbol{s}\sim \mathcal{D}_{j-1}}\left[ f(\boldsymbol{o}[\boldsymbol{s}])\right]-\mathbb{E}_{\boldsymbol{s}'\sim \mathcal{D}_{j}}\left[f(\boldsymbol{o}[\boldsymbol{s}']) \right]
\end{equation}
from (6) and (9). By the summation of (10), we get
\begin{equation}
\left(1-\frac{1}{k}\right)\left(\mathbb{E}_{\boldsymbol{s}'\sim \mathcal{D}_{n}}\left[ f(\boldsymbol{s}')\right]-\mathbb{E}_{\boldsymbol{s}\sim \mathcal{D}_{0}}\left[f(\boldsymbol{s}) \right]\right) \geq \mathbb{E}_{\boldsymbol{s}\sim \mathcal{D}_{0}}\left[ f(\boldsymbol{o}[\boldsymbol{s}])\right]-\mathbb{E}_{\boldsymbol{s}'\sim \mathcal{D}_{n}}\left[f(\boldsymbol{o}[\boldsymbol{s}']) \right].
\end{equation}
Note that $\boldsymbol{o}[\boldsymbol{s}']=\boldsymbol{s}'$ for $\boldsymbol{s}' \in \mathrm{supp}(\mathcal{D}_n)$, and $\boldsymbol{o}[\boldsymbol{s}]=\boldsymbol{o}$ for $\boldsymbol{s} \in \mathrm{supp}(\mathcal{D}_0)$. Now we have
\begin{eqnarray}
f(\boldsymbol{o})&\leq& \left(2-\frac{1}{k}\right) \mathbb{E}_{\boldsymbol{s}'\sim \mathcal{D}_{n}}\left[ f(\boldsymbol{s}')\right]-\left(1-\frac{1}{k}\right)f(\boldsymbol{0})\nonumber \\
&\leq& \left(2-\frac{1}{k}\right) \mathbb{E}_{\boldsymbol{s}'\sim \mathcal{D}_{n}}\left[ f(\boldsymbol{s}')\right]\nonumber \\
&\leq&\left(2-\frac{1}{k}\right) \max_{\boldsymbol{s}' \in \mathrm{supp}(\mathcal{D}_{n})} \{f(\boldsymbol{s}')\} \nonumber 
\end{eqnarray}
\end{proof}

The algorithm performs a polynomial number of value oracle queries.
 
\begin{theorem}
Algorithm \ref{Dkmax} returns a solution after $\mathrm{O}(n^2k^2)$ value oracle queries.
\end{theorem}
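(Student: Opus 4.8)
The plan is to bound the size of $\mathrm{supp}(\mathcal{D}_{j})$ by induction on $j$, and then charge the value oracle queries of iteration $j$ against this quantity. I claim $|\mathrm{supp}(\mathcal{D}_j)| \le 1 + jk$ for all $j$. The base case is $|\mathrm{supp}(\mathcal{D}_0)| = 1$. For the inductive step set $m := |\mathrm{supp}(\mathcal{D}_{j-1})|$. The linear program (2)--(4) has $km$ variables $p_{i,\boldsymbol{s}}$ and, besides the sign constraints (4), only $k+m$ constraints: the $k$ inequalities (2) and the $m$ equalities (3). Hence at a basic feasible solution the number of nonzero coordinates is at most the number of non-sign constraints, i.e.\ at most $k+m$ of the $p_{i,\boldsymbol{s}}$ are positive. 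Moreover every $\boldsymbol{s} \in \mathrm{supp}(\mathcal{D}_{j-1})$ has $\boldsymbol{s}(e^{(j)}) = 0$, because the coordinate $e^{(j)}$ is assigned only in iteration $j$; so from $\boldsymbol{s}_{e^{(j)},i}$ one recovers both $\boldsymbol{s}$ and $i$, i.e.\ the map $(i,\boldsymbol{s}) \mapsto \boldsymbol{s}_{e^{(j)},i}$ is injective on the index set of positive variables. Therefore $|\mathrm{supp}(\mathcal{D}_j)|$ equals the number of positive $p_{i,\boldsymbol{s}}$, which is at most $k + m \le k + (1 + (j-1)k) = 1 + jk$.

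For the extreme-point argument to apply I would first check that the polytope (2)--(4) is nonempty (it is clearly bounded, since $0 \le p_{i,\boldsymbol{s}} \le 1$): choosing, for each $\boldsymbol{s}$ separately, the distribution used in Algorithm \ref{kmax} --- namely $p_{i,\boldsymbol{s}}$ proportional to $y_i(\boldsymbol{s})^{k-1}$, or $p_{1,\boldsymbol{s}} = 1$ if $y_1(\boldsymbol{s}) = \cdots = y_k(\boldsymbol{s}) = 0$ --- satisfies (3), (4) and, by the same weighted AM--GM inequality used in \cite{iwata2016improved} to prove (1) (now applied with every $l$ in the role of $i^*$), also (2). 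Hence a basic feasible solution exists and can be found in polynomial time, using no value oracle queries; the construction of $\mathcal{D}_j$ in (5) is likewise query-free bookkeeping.

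It remains to count the queries. The only evaluations of $f$ in iteration $j$ are those needed for $y_i(\boldsymbol{s}) = f(\boldsymbol{s}_{e^{(j)},i}) - f(\boldsymbol{s})$ over all $i \in \{1,\dots,k\}$ and $\boldsymbol{s} \in \mathrm{supp}(\mathcal{D}_{j-1})$, which is $O(k\,|\mathrm{supp}(\mathcal{D}_{j-1})|) = O(jk^2)$ queries by the bound above. Summing over $j = 1,\dots,n$ yields $\sum_{j=1}^n O(jk^2) = O(n^2k^2)$, and the final $\argmax$ over $\mathrm{supp}(\mathcal{D}_n)$ costs a further $O(|\mathrm{supp}(\mathcal{D}_n)|) = O(nk)$ queries (indeed these values are already available from iteration $n$), which is dominated.

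The main obstacle is the support-size bound. A priori each $\boldsymbol{s}$ in the current support could branch into $k$ successors $\boldsymbol{s}_{e^{(j)},i}$, so a naive analysis gives an exponential blow-up $k^n$; the point is that taking an \emph{extreme} point of the linear program collapses all but $O(k)$ of these branches per iteration. The two things that need care are therefore (i) verifying that the program always has an extreme point (feasibility via the Algorithm \ref{kmax} distribution, boundedness being obvious), and (ii) the injectivity of $(i,\boldsymbol{s}) \mapsto \boldsymbol{s}_{e^{(j)},i}$, which is what makes ``$\le k+m$ positive LP variables'' translate into ``$\le k+m$ support points''.
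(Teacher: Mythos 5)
Your proposal is correct and follows essentially the same route as the paper: show that an extreme point of the LP (2)--(4) has at most $|\mathrm{supp}(\mathcal{D}_{j-1})|+k$ positive variables (feasibility being witnessed by the Algorithm~\ref{kmax} distribution and boundedness being clear), deduce $|\mathrm{supp}(\mathcal{D}_j)|\leq jk+1$, and sum $k|\mathrm{supp}(\mathcal{D}_{j-1})|$ queries over the $n$ iterations to get $\mathrm{O}(n^2k^2)$. Your added remarks (the injectivity of $(i,\boldsymbol{s})\mapsto\boldsymbol{s}_{e^{(j)},i}$ and the cost of the final $\argmax$) are harmless elaborations of steps the paper leaves implicit, and you even state the counts of constraints in (2) and (3) more accurately than the paper's own wording.
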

\begin{proof}
Algorithm \ref{Dkmax} uses the value oracle to caluculate $y_i(\boldsymbol{s})$. At $j$-th iteration, the number of  $y_i(\boldsymbol{s})$ is $k|\mathcal{D}_{j-1}|$. From (5), $|\mathcal{D}_{j}|$ equals the number of $p_{i,\boldsymbol{s}}\neq 0$. Then we have to consider $p_{i,\boldsymbol{s}}\neq 0$ at $j$-th iteration.

By the definition, $(p_{i,\boldsymbol{s}})_{i=1,\ldots,k,\ \boldsymbol{s} \in \mathrm{supp}(\mathcal{D}_{j-1})}$ is an extreme point solution of (2), (3), and (4). Note that, we can get a solution by setting  $(p_{i,\boldsymbol{s}})$ as the distribution of Algorithm \ref{kmax} for each  $\boldsymbol{s} \in \mathrm{supp}(\mathcal{D}_{j-1})$. We can also see the feasible region of (2), (3), and (4) is bounded. Then some extreme point solution exists.

Let $|\mathcal{D}_{j-1}|=m$. By $(p_{i,\boldsymbol{s}})_{i=1,\ldots,k,\ \boldsymbol{s} \in \mathrm{supp}(\mathcal{D}_{j-1})}\in \mathbb{R}^{km}$ and $k$ equalities of (3), $km-k$ inequalities are tight at any extreme point solution. (2) have $m$ inequalities and (4) have $km$ inequalities. Then, at least  $km-k-m$ inequalities of (4) are tight. Hence, the number of $p_{i,\boldsymbol{s}}\neq 0$ is at most $m+k$.

Now we have $|\mathcal{D}_{j}|\leq |\mathcal{D}_{j-1}|+k$. We can also see $|\mathcal{D}_{j}|\leq jk+1$. Then the number of value oracle queries is
\[ \sum_{j=1}^n k|\mathcal{D}_{j-1}| \leq \sum_{j=1}^n k (jk +1). \]
\end{proof}

In our algorithm, we have to search for an extreme point solution. We can do it by solving LP for some objective function. If we use LP for our algorithm, it is polynomial-time not only for the number of queries but also for the number of operations. 
The simplex method is not proved to be a polynomial-time method. However, it is practical. Our algorithm needs only an extreme point solution, then if we get a basic solution, it is enough. So we can use the first phase of  two-phase simplex method to find an extreme point solution.

\section{Conclusion}
We showed a derandomized algorithm for monotone $k$-submodular maximization. It is $\frac{k}{2k-1}$-approximation and polynomial-time algorithm.  

One of open problems is a faster method for finding an extreme point solution of the linear formulation. For submodular functions, \cite{buchbinder2016deterministic} showed greedy methods are effective. It is because their formulation is the form of fractional knapsack problem.  Our formulation is similar to theirs, and ours can be seen as the form of an LP relaxation of multidimensional knapsack problem. However, faster methods are not given than general LP solutions. The number of constraints in our formulation depends on $k$ and the number of iterations. It is therefore difficult to find an extreme point faster.

Constructing a deterministic algorithm for nonmonotone functions is also an important open problem. For nonmonotone functions, we have pairwise monotonicity instead of monotonicity. In such a situation, for some $i$, $a_i$ can be negative. However, if $y_j > 0$ for all $j$, we can't find such $i$. Then, if we try to use the same derandomizing method, the number of constraints in the linear formulation and the size of $\mathcal{D}$ will be exponential. So algorithm can't finish in polynomial-time.

\bibliographystyle{plain.bst}
\bibliography{suririnko_yoko_ksub_ref.bib}

\end{document}